\documentclass[11pt,a4paper]{article}
\usepackage{mathrsfs}
\usepackage{}
\usepackage{bbm}
\usepackage{amssymb}
\usepackage{indentfirst,mathrsfs}
\usepackage{amsfonts,amsmath,amssymb,amsthm}
\usepackage{latexsym,amscd,multirow}
\usepackage{amsbsy}
\usepackage{xcolor}
\usepackage[all,cmtip]{xy}
\usepackage{graphicx}
\usepackage{longtable}
\usepackage{multirow}
\usepackage{array}
\usepackage{bm}
\usepackage{makecell}
\usepackage{booktabs}

\newtheorem{thm}{Theorem}
\newtheorem{lem}[thm]{Lemma}   
\newtheorem{cor}[thm]{Corollary}

\newtheorem{example}{Example}
\newtheorem{remark}{Remark}

\newcounter{alg}
\newlength{\lefttab}
\newlength{\numberoffset}
{\trivlist
   \topsep=0pt\parsep=0pt\itemsep=0pt
   \addtolength{\lefttab}{1.25em}
   \addtolength{\numberoffset}{1.25em}
   \leftskip=\lefttab}%
  {\endtrivlist}

\begin{document}
\title{Constructions of $q$-ary Golay Complementary Pairs Over Flexible Non-Power-of-Two Lengths.}

\author{Zhiye Yang$^1$ and Keqin Feng$^2$\\
~\\
$^1$ Research Center for Number Theory and Its Applications\\
School of Mathematics, Northwest University\\ Xi'an 710127,  Shaanxi, China\\
E-mail address: zyyang02@126.com \\
$^2$ Department of Mathematical Sciences,Tsinghua University,\\ Beijing, 100084, China.\\
E-mail address: fengkq@tsinghua.edu.cn\\}

\date{}

\maketitle
\begin{quote}
{\small {\bf Abstract:}}
Golay complementary pair (GCP), first introduced by Golay in 1951, has been extensively studied and widely applied in communication systems. 
 A $q$-ary GCP $\{\mathbf{A},\mathbf{B}\}$ consists of two $q$-ary complex sequences $\mathbf{A}=(A_0,\cdots,A_{M-1})$ and $\mathbf{B}=({B}_0,\cdots,{B}_{M-1})$ of equal length $M$, where $\textit{A}_i,\textit{B}_i\in\{\xi^a:0\leq a\leq q-1\}$ with $\xi=e^{\frac{2\pi\sqrt{-1}}{q}}$. 
 In this paper, 
  we prove that the existence of a quaternary ($q=4$) GCP of length $M$ is equivalent to the explicit constructibility of ($4h$)-ary GCPs of length $2^mM$ for all integers $h,m\geq1$. All proposed sequences are constructed via extended Boolean functions (EBFs), and the direct construction yields GCPs with more flexible length ranges than all previous relevant results.



{\small {\bf Keywords:}} $q$-ary complex sequence, aperiodic auto-correlation function, Golay complementary pair, extended Boolean function. 

\end{quote}

\section{Introduction}
Sequences with favorable correlation properties are  essential for high-resolution target sensing in integrated sensing and communication (ISAC), radar, and communication systems. Let $q\geq2$ be an integer. A $q$-ary sequence with length $M\geq2$ is a \emph{complex sequence} $\mathbf{A}=({A}_0,\cdots,{A}_{M-1})$ with ${A}_i=\xi^{a_i}$ for all $0\leq i\leq M-1$, where $\xi=e^{\frac{2\pi\sqrt{-1}}{q}}$ and $a_i\in\mathbb{Z}_q=\{0,1,\cdots,q-1\}$. The \emph{aperiodic auto-correlation function} of $\mathbf{A}$ at shift $\lambda$ is defined by
\begin{equation}\label{AACF}
C_\mathbf{A}(\lambda)=\left\{\begin{array}{cl}
\sum\limits_{l=0}^{M-1-\lambda}A_lA_{l+\lambda}^*, &0\leq\lambda\leq M-1, \\
\sum\limits_{l=0}^{M-1-\lambda}A_{l-\lambda}A_{l}^*, &-(M-1)\leq\lambda<0 , \\
0, &|\lambda|\geq M,
\end{array}\right.
\end{equation}
where $A_{l}^*$ is the complex conjugation of $A_{l}$. By definition, we have $C_\mathbf{A}(0)=\sum\limits_{l=0}^{M-1}A_lA_{l}^*=\sum\limits_{l=0}^{M-1}1=M$. For two $q$-ary sequences
$\mathbf{A}=(A_0,\cdots,A_{M-1})$ and $\mathbf{B}=({B}_0,\cdots,{B}_{M-1})$ of equal length $M$, the pair $\{\mathbf{A},\mathbf{B}\}$ is called a \emph{Golay complementary pair}, denoted by an $(M,q)$-GCP, if
\begin{equation}\label{GCPdef}
C_\mathbf{A}(\lambda)+C_\mathbf{B}(\lambda)=0, \,\,\,\text{for}\,1\leq\lambda\leq M-1.
\end{equation}
From the definition of (\ref{AACF}), it follows that $C_\mathbf{A}(\lambda)=C_\mathbf{A}(-\lambda)^*$. Consequently, if (\ref{GCPdef}) holds for all $1\leq\lambda\leq M-1$, then it holds for all $\lambda\neq0$.

The concept of Golay complementary pairs was first introduced by Marcel J. E. Golay in 1951\cite{Golay1951}, who further investigated their properties for the binary case ($q=2$) in 1961\cite{Golay1961tit}. Benefiting from their ideal correlation properties, GCPs have been extensively employed in a wide range of communication and signal processing systems, including channel estimation\cite{Spasojevic2001tit}, orthogonal frequency division multiplexing (OFDM)\cite{Paterson2000tit,Sarkar2020TCOMM}, radar waveform design\cite{Pezeshki2008tit}, pulse compression\cite{Budisin1991}, etc.

Many GCPs with various parameters $q$ and $M$ have been discovered via two primary approaches: computer search\cite{Borwein2004,Craigen2002DM}, and systematic construction using methods based on Hadamard matrices\cite{Seberry1992}, generalized Boolean functions\cite{Davis1999tit,Kumar2023TCOMM,Priyanshu2025CCDS}, and para-unitary matrices\cite{Wang2021tit}.  

For the binary case ($q=2$), $(M,2)$-GCPs (BGCPs) have been constructed \cite{Turyn1974} for following lengths 
$$M=2^a10^b26^c\geq2,\,\,\,a,b,c\geq0,$$
and it is conjectured that no BGCPs exist for any other lengths. 
For the quaternary case ($q=4$), 
 we have the following existence result: 
\begin{lem}\label{QGCPsExist}
	\cite[Corollary~6]{Craigen2002DM}
	There exist $(M,4)$-GCPs for lengths of the form
	$$M=2^{a+u}3^b5^c11^d13^e\geq2,$$
	where $a,b,c,d,e,u\geq0,\,b+c+d+e\leq a+2u+1$, and $u\leq c+e.$
\end{lem}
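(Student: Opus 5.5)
This is a cited existence result (Craigen–Holzmann–Kharaghani, Corollary 6), and the plan is to reproduce it from three ingredients: base pairs, a product theorem for complementary pairs, and Turyn-type compositions.

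\textbf{Seeds.} The base quaternary GCPs are:
\begin{itemize}
\item lengths $2$, $10$ and $26$, coming from binary GCPs, since binary pairs are quaternary;
\item length $3$, e.g. $\mathbf{A}=(1,1,-1)$, $\mathbf{B}=(1,\sqrt{-1},1)$;
\item lengths $5$, $11$ and $13$, which exist by known computer searches as quaternary pairs.
\end{itemize}

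\textbf{Product theorem.} Represent a pair by its polynomials $A(z),B(z)$. The Golay property is
\[
|A(z)|^2+|B(z)|^2=2M \quad\text{on } |z|=1.
\]
\begin{enumerate}
\item \emph{Length multiplication.} Given pairs of lengths $M$ and $N$, set
\[
C(z)=A(z)E(z^M)+B(z)F(z^M)\cdot\epsilon,\qquad D(z)=\ldots,
\]
the standard Turyn/Golay interleaving. This gives a pair of length $2MN$.
\item \emph{Removing the factor $2$.} The key quaternary improvement is a length-$MN$ construction. It is available when one factor is binary and the other quaternary, via
\[
C=\tfrac12\big[(A+B)E+(A-B)F^*\big],
\]
which is Turyn-type and works because $(A\pm B)/(1\pm i)$ stay quaternary when $A,B$ are binary.
\end{enumerate}
The factors of the form $2^u$ paired with $u\le c+e$ arise exactly this way: each binary seed $10=2\cdot5$ or $26=2\cdot13$ contributes a factor $2$ that can be "absorbed". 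In effect each use of the binary pairs of lengths $10$ and $26$ lets one trade a factor.

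\textbf{Counting.} The constraints come from the following bookkeeping.
\begin{itemize}
\item Each quaternary-only seed ($3,5,11,13$) requires a factor $2$ when multiplied with another quaternary pair. This gives $b+c+d+e\le a+2u+1$.
\item The "$+1$" accounts for one free seed.
\item The "$2u$" accounts for binary lengths $10,26$, which count as binary and so give free multiplication.
\end{itemize}

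\textbf{Main obstacle.} The main obstacle is this precise bookkeeping, and in particular verifying that the binary-quaternary product of length $MN$ preserves $4$-ary entries. I would check it first on the $3\times 10$ case.
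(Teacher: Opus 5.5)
\textbf{There is a genuine gap: your product rules cannot reach the lemma's bound.} The paper gives no proof of this lemma; it quotes it from Craigen--Holzmann--Kharaghani. Your seeds (binary $2,10,26$; quaternary $3,5,11,13$) are the right ones. The problem is the rules. With only binary$\times$binary $\to MN$, binary$\times$quaternary $\to MN$ and quaternary$\times$quaternary $\to 2MN$, using $k$ quaternary-only seeds creates $k-1$ \emph{new} factors of $2$. These come on top of the $2$'s carried by the binary seeds $2^{a'}10^{u_1}26^{u_2}$. Writing $u=u_1+u_2$ and $2^{a+u}\,\|\,M$, this forces $k-1\le a$, i.e.\ only $b+c+d+e\le a+u+1$.

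Concretely, $M=90=2\cdot3^2\cdot5$ is covered by the lemma ($a=0$, $u=1$, $b=2$, $c=1$), but your rules cannot produce it. If the $5$ comes from a quaternary seed, three quaternary-only seeds are combined, creating two factors of $2$. If it comes from the binary $10$, the two $3$'s are combined, creating one factor $2$ on top of the $2$ inside $10$. Either way $4\mid M$. A binary $10$ or $26$ makes its odd prime free, which accounts for one $u$. The second $u$ requires that same factor $2$ to also serve as the glue between two quaternary-only pairs, and none of your constructions does that. Your counting paragraph asserts $a+2u+1$ rather than deriving it.

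\textbf{The missing ingredient} is the three-factor product: a binary pair $(E,F)$ of length $p\ge2$ and quaternary pairs $(A,B)$, $(C,D)$ of lengths $m,n$ give a quaternary pair of length $pmn$. The case $p=2$ is your $2MN$ rule. One proof goes as follows.
\begin{itemize}
\item Put $G=(A+B)/2$, $H=(A-B)/2$, $\tilde F(z)=z^{p-1}F(z^{-1})$, and set $X(z)=G(z^p)E(z)+H(z^p)\tilde F(z)$, $Y(z)=G(z^p)F(z)-H(z^p)\tilde E(z)$.
\item The entry of $X$ at position $i+pj$ is $E_iA_j$ or $E_iB_j$ according as $E_i=\pm F_{p-1-i}$. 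By Golay's identity $E_iE_{p-1-i}=-F_iF_{p-1-i}$, the corresponding entry of $Y$ is $F_iA_j$ or $F_iB_j$. Hence $X_l=\pm Y_l$ for every $l$, and $|X|^2+|Y|^2=2pm$ on $|z|=1$.
\item So $S=(X+Y)/2$ and $T=(X-Y)/2$ have complementary supports, unimodular quaternary entries, and $|S|^2+|T|^2=pm$.
\item Then $P=S(z)C(z^{pm})+T(z)\tilde D(z^{pm})$, $Q=S(z)D(z^{pm})-T(z)\tilde C(z^{pm})$, with conjugate reversals $\tilde C,\tilde D$, is a quaternary pair of length $pmn$.
\end{itemize}

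With this rule the counting works. Split $u=u_1+u_2$ with $u_1\le c$, $u_2\le e$. There are then $k=b+c+d+e-u$ quaternary-only seeds and $a+u$ binary seeds. Gluing the quaternary-only seeds consumes $k-1\le a+u$ binary seeds, which is exactly $b+c+d+e\le a+2u+1$. Leftover binary seeds are absorbed by the $MN$ rule.

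Two smaller errors. Your displayed $2MN$ formula has no offset, so both summands occupy the same $MN$ positions and the entries are not unimodular. Also, the binary$\times$quaternary product works because $(A\pm B)/2$ are $\{0,\pm1\}$-valued with complementary supports. It is not because $(A\pm B)/(1\pm\sqrt{-1})$ is quaternary; that is false, e.g.\ $2/(1+\sqrt{-1})=1-\sqrt{-1}$.
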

For the construction of quaternary Golay complementary pairs (QGCPs), Davis and Jedwab in \cite{Davis1999tit} provided an explicit construction of $(2^h,2^m)$-GCPs for all $h,m\geq1$ using generalized Boolean functions (GBFs). Recently, by employing restricted Boolean functions, Kumar et al. \cite{Kumar2023TCOMM} proposed a direct construction of $(M,q)$-GCPs over non-power-of-two lengths $M=5\cdot2^{m-3}\,(m\geq5)$ and 
$M=13\cdot2^{m-4}\,(m\geq6)$ for all even $q$ , which settles the open problem of direct GCP construction over non-power-of-two lengths. In 2025, Priyanshu et al. \cite{Priyanshu2025CCDS} further presented a direct construction of $(3\cdot2^m,4h)$-GCPs and $(11\cdot2^m,4h)$-GCPs  for all $h,m\geq1$, based on extended Boolean functions (EBFs).


 However, a critical limitation of these existing constructions is that they are typically restricted to specific lengths, which motivates this paper to pursue more flexible and general GCP constructions. Motivated by the works in \cite{Kumar2023TCOMM} and \cite{Priyanshu2025CCDS}, in this paper we demonstrate that the construction framework in \cite{Priyanshu2025CCDS} can be generalized to support substantially more flexible lengths. 
 Specifically, we prove that a $(M,4)$-GCP exists if and only if $(M\cdot2^m,4h)$-GCPs exist for all integers $h,m\geq1$. 
The result shows that $M$ can be any integer given in Lemma \ref{QGCPsExist}. Investigating this problem is of practical significance for designing sequences with low peak-to-mean envelope power ratio (PMEPR) under flexible length requirements.

The remainder of this paper is organized as follows. Section 2 gives some preliminaries. In Section 3, we present the main results of this paper and give some examples for illustration. Section 4 concludes this paper.

\section{Preliminaries}
In this section, we recall the definition of generalized Boolean function (GBF), extended Boolean function (EBF) and the fundamental relationship between GBFs, EBFs and $q$-ary complex sequences.


Let $\mathbb{Z}_2=\{0,1\}$, $\mathbb{Z}_{4h}=\{0,1,\cdots,4h-1\}$ for any integer $h\geq1$, and let $m\geq2$ be an integer. A \emph{generalized Boolean function} (GBF) 
$$f=f(x_1,x_2,\cdots,x_m):\mathbb{Z}_2^m\rightarrow\mathbb{Z}_{4h}$$
can be represented by its \emph{value sequence} of length $2^m$ over $\mathbb{Z}_{4h}$:
$$\boldsymbol{f}=\left(f_0,f_1,\cdots,f_{2^m-1} \right),\,\,\,f_I\in \mathbb{Z}_{4h},\,\,\,0\leq I\leq2^m-1,$$
where $f_I\in \mathbb{Z}_{4h}$ for all $0\leq I\leq2^m-1$. For each index $I$, its $2$-adic expansion is given by
$$I=i_12^{m-1}+i_22^{m-2}+\cdots+i_{m-1}2+i_m,\,\,\,i_k\in\left\lbrace 0,1\right\rbrace,\,\,1\leq k\leq m,$$
and then we define $f_I=f(i_1,i_2,\cdots,i_{m})$ and associate $\boldsymbol{f}$ with a \emph{complex-valued sequence} 
$$\mathbf{F}=\left(\xi^{f_0},\xi^{f_1},\cdots,\xi^{f_{2^m-1}} \right)$$  
where $\xi=e^\frac{2\pi\sqrt{-1}}{4h}$.

For any integer $M\geq2$, an \emph{extended Boolean function} (EBF) is a generalization of a GBF, defined as follows 
$$f=f(x_1,x_1,\cdots,x_m,y):\mathbb{Z}_2^m\times\mathbb{Z}_M\rightarrow\mathbb{Z}_{4h}.$$
This function can be represented by its value sequence of length $M\cdot2^m$ over $\mathbb{Z}_{4h}$
$$\boldsymbol{f}=\left(f_0,f_1,\cdots,f_{M\cdot2^m-1} \right),\,\,\,f_I\in \mathbb{Z}_{4h},\,\,\,0\leq I\leq M\cdot2^m-1,$$
where
\begin{align*}
	\left\{
	\begin{array}{ll}
		I=I'M+y, & 0\leq I'\leq 2^m-1,\,0\leq y\leq M-1,\\
		I'=i_12^{m-1}+i_22^{m-2}+\cdots+i_{m-1}2+i_m, & i_k\in\left\lbrace 0,1\right\rbrace,\,1\leq k\leq m,
	\end{array}
	\right.
\end{align*}
We then set $f_I=f(i_1,i_2,\cdots,i_{m},y)$ and, similarly, associate this value sequence $\boldsymbol{f}$ with a complex-valued sequence
$$\mathbf{F}=\left(\xi^{f_0},\xi^{f_1},\cdots,\xi^{f_{M\cdot2^m-1}} \right)$$  
where $\xi=e^\frac{2\pi\sqrt{-1}}{4h}$.

In the following, we always identify $I$ with $(i_1,i_2,\cdots,i_{m},y)$. We also denote $\xi=e^\frac{2\pi\sqrt{-1}}{4h}$ and $\omega=e^\frac{2\pi\sqrt{-1}}{4}$, where $\xi$ and $\omega$ are primitive $4h$-th and $4$-th roots of unity, respectively.

%

\section{Proposed Construction of $q$-ary GCPs}
\subsection{Main Theorem: Necessary and Sufficient Condition for GCP Construction}
Let $h\geq1$, $m,M\geq2$, and let $\pi$ be a permutation on $\{1,2,\cdots,m\}$. For two 
generalized Boolean functions 
$$\phi_1,\phi_2:\mathbb{Z}_M\rightarrow\mathbb{Z}_{4}.$$
we define three extended Boolean functions 
$$f,a,b:\mathbb{Z}_2^m\times\mathbb{Z}_M\rightarrow\mathbb{Z}_{4h}$$
as follows
\begin{equation}\label{EBFdef}
	\left\{\begin{array}{ll}
		f(x_1,x_2,\cdots,x_{m},y)=2h\sum\limits_{k=1}^{m-1}x_{\pi(k)}x_{\pi(k+1)}+\sum\limits_{k=1}^{m}c_kx_k+h\left[ x_{\pi(m)}(\phi_2(y)-\phi_1(y))+\phi_1(y)\right] ,\\
		a(x_1,x_2,\cdots,x_{m},y)=f(x_1,x_2,\cdots,x_{m},y)+\theta,\\
		b(x_1,x_2,\cdots,x_{m},y)=f(x_1,x_2,\cdots,x_{m},y)+2hx_{\pi(1)}+\theta'.
	\end{array}\right.
\end{equation}
where $\theta,\theta',c_k\in\mathbb{Z}_{4h}$ for $1\leq k\leq m$.

Let 
\begin{equation}\label{seqdefphi}
\mathbf{\Phi}_1=(\Phi_1(0),\Phi_1(1),\cdots,\Phi_1(M-1)) \quad \text{and} 
\quad \mathbf{\Phi}_2=(\Phi_2(0),\Phi_2(1),\cdots,\Phi_2(M-1))
\end{equation}
be the quaternary sequences associated with $\phi_1$ and $\phi_2$, respectively, where $\Phi_1(i)=\omega^{\phi_1(i)}$,  $\Phi_2(i)=\omega^{\phi_2(i)}$ and $\omega=e^\frac{2\pi\sqrt{-1}}{4}$ for $0\leq i\leq M-1$.
Let 
$$\boldsymbol{a}=\left(a_0,a_1,\cdots,a_{M\cdot2^m-1} \right)\,\,\,\text{and}\,\,\,\boldsymbol{b}=\left(b_0,b_1,\cdots,b_{M\cdot2^m-1} \right)$$
be the sequences over $\mathbb{Z}_{4h}$ of length $M\cdot2^m$ corresponding to
the EBFs $a$ and $b$, respectively. We then define the $4h$-ary complex-valued sequences 
\begin{equation}\label{seqdefab}
\mathbf{A}=\left(\xi^{a_0},\xi^{a_1},\cdots,\xi^{a_{M\cdot2^m-1}} \right)\,\,\,\text{and}\,\,\,\mathbf{B}=\left(\xi^{b_0},\xi^{b_1},\cdots,\xi^{b_{M\cdot2^m-1}} \right),\,\,\,\xi=e^\frac{2\pi\sqrt{-1}}{4h},
\end{equation}
 and both $\mathbf{A}$ and $\mathbf{B}$ have length $M\cdot2^m$.

From the above construction, we obtain the following main result.

\begin{thm}\label{if-th}
	Let $h\geq1$, $m,M\geq2$. Let $\mathbf{\Phi}_1, \mathbf{\Phi}_2$ and $\mathbf{A}, \mathbf{B}$ be the sequences derived via the construction in (\ref{seqdefphi}) and (\ref{seqdefab}), respectively. 
	Then $\{\mathbf{A},\mathbf{B}\}$ is a $(M\cdot2^m,4h)$-GCP if and only if $\{\mathbf{\Phi}_1,\mathbf{\Phi}_2\}$ is a $(M,4)$-GCP.
\end{thm}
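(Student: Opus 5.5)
The plan is to prove a single polynomial identity on the unit circle from which both directions follow at once. For a complex sequence $\mathbf{X}=(X_0,\dots,X_{N-1})$ put $X(z)=\sum_i X_iz^i$. Then $|X(z)|^2=\sum_\lambda C_{\mathbf X}(\lambda)z^\lambda$ for $|z|=1$. Hence $\{\mathbf X,\mathbf Y\}$ is a GCP of length $N$ if and only if $|X(z)|^2+|Y(z)|^2\equiv 2N$ on $|z|=1$. I will aim to show, for all $|z|=1$,
\begin{equation*}
|A(z)|^2+|B(z)|^2 \;=\; 2^{m}\bigl(|\Phi_1(z)|^2+|\Phi_2(z)|^2\bigr).
\end{equation*}
Granting this, $\{\mathbf A,\mathbf B\}$ is an $(M2^m,4h)$-GCP if and only if the left side is $\equiv 2M2^m$. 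This holds if and only if $|\Phi_1|^2+|\Phi_2|^2\equiv 2M$, i.e. if and only if $\{\mathbf\Phi_1,\mathbf\Phi_2\}$ is an $(M,4)$-GCP. So both implications of Theorem \ref{if-th} come from the one identity.

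First I would unwind the construction (\ref{EBFdef}). Since $\xi^{h}=\omega$, we have $\xi^{h\phi_j(y)}=\Phi_j(y)$. Write $g(\mathbf x)=2h\sum_{k=1}^{m-1}x_{\pi(k)}x_{\pi(k+1)}+\sum_k c_kx_k$ and $I'(\mathbf x)=\sum_k x_k2^{m-k}$. Then
\begin{equation*}
A(z)=\xi^{\theta}\sum_{\mathbf x\in\mathbb Z_2^m}\xi^{g(\mathbf x)}z^{M I'(\mathbf x)}\,\Phi_{1+x_{\pi(m)}}(z),\qquad
B(z)=\xi^{\theta'}\sum_{\mathbf x}(-1)^{x_{\pi(1)}}\xi^{g(\mathbf x)}z^{M I'(\mathbf x)}\,\Phi_{1+x_{\pi(m)}}(z).
\end{equation*}
The unimodular constants $\xi^\theta,\xi^{\theta'}$ drop out of $|A|^2,|B|^2$. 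Expanding $|A|^2+|B|^2$ gives a double sum over $(\mathbf x,\mathbf x')$. Each summand is $\xi^{g(\mathbf x)-g(\mathbf x')}z^{M(I'(\mathbf x)-I'(\mathbf x'))}\Phi_{1+x_{\pi(m)}}(z)\overline{\Phi_{1+x'_{\pi(m)}}(z)}$, multiplied by $1+(-1)^{x_{\pi(1)}+x'_{\pi(1)}}$.

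The diagonal terms $\mathbf x=\mathbf x'$ contribute $2\sum_{\mathbf x}|\Phi_{1+x_{\pi(m)}}(z)|^2=2\cdot 2^{m-1}(|\Phi_1|^2+|\Phi_2|^2)$, which is exactly the right-hand side. The rest is the Davis–Jedwab cancellation. Terms with $x_{\pi(1)}\ne x'_{\pi(1)}$ vanish at once. For $\mathbf x\ne\mathbf x'$ with $x_{\pi(1)}=x'_{\pi(1)}$, let $v$ be the least index with $x_{\pi(v)}\neq x'_{\pi(v)}$, so $v\ge 2$. I will pair $(\mathbf x,\mathbf x')$ with $(\mathbf x\oplus e_{\pi(v-1)},\,\mathbf x'\oplus e_{\pi(v-1)})$. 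This flip does not change $I'(\mathbf x)-I'(\mathbf x')$, the $c_k$-contributions, the factor $(-1)^{x_{\pi(1)}+x'_{\pi(1)}}$, or the coordinates $x_{\pi(m)},x'_{\pi(m)}$. The last holds because $v-1\le m-1$, so the flipped coordinate is never $\pi(m)$. Consequently the product $\Phi_{1+x_{\pi(m)}}\overline{\Phi_{1+x'_{\pi(m)}}}$ is the same polynomial for both members of the pair.

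Meanwhile $g(\mathbf x)-g(\mathbf x')$ changes by $2h(x_{\pi(v-2)}-x'_{\pi(v-2)}+x_{\pi(v)}-x'_{\pi(v)})\equiv 2h \pmod{4h}$. The first difference is zero or absent when $v=2$, and the second is $\pm1$. So the two summands are negatives of each other and cancel.

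The point I expect to need the most care is the cross terms between a $\Phi_1$-block and a $\Phi_2$-block. These are exactly the pairs with $x_{\pi(m)}\ne x'_{\pi(m)}$, i.e. those where the first difference occurs at or before $v=m$. I must check that the involution really fixes the factor $\Phi_{1+x_{\pi(m)}}\overline{\Phi_{1+x'_{\pi(m)}}}$ in every such case, including $v=m$, where the flipped index is $\pi(m-1)$. I must also check that the extra term $hx_{\pi(m)}(\phi_2-\phi_1)$ in $f$ has been fully absorbed into the choice of $\Phi_1$ or $\Phi_2$ and leaves no stray phase. Once this bookkeeping is confirmed, the identity follows, and with it the theorem.
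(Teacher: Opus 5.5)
Your proof is correct and is essentially the paper's argument in generating-function form. The paper's classes $D_1$, $D_2$, $D_3$ (binary parts of $I$ and $I+\lambda$ differing at $\pi(1)$, first differing at some $\pi(u)$ with $u\ge 2$, or equal) correspond exactly to your terms with $x_{\pi(1)}\neq x'_{\pi(1)}$, your off-diagonal terms cancelled by flipping coordinate $\pi(v-1)$ in both $\mathbf x$ and $\mathbf x'$, and your diagonal terms; you aggregate over all shifts at once, which spares you the paper's check that $J'=I'+\lambda$ with $I'\in[0,L-1-\lambda]$, whereas the paper fixes $\lambda$ and pairs positions. The points you flag at the end are already settled by your own observations ($\pi(v-1)\neq\pi(m)$ because $v\le m$, and $h[x(\phi_2-\phi_1)+\phi_1]=h\phi_{1+x}$ exactly in $\mathbb{Z}_{4h}$ for $x\in\{0,1\}$), and the exponent sign in your formula for $|X(z)|^2$ (with the paper's convention it is $z^{-\lambda}$) is harmless since $C_{\mathbf X}(-\lambda)=C_{\mathbf X}(\lambda)^*$.
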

\begin{proof}
Let $L=M\cdot2^m$. We need to prove that for all $1\leq\lambda\leq L-1$,
\begin{equation}\label{CAF-GCP}
\sum\limits_{I=0}^{L-1-\lambda}\xi^{{a}_I-{a}_{I+\lambda}}+\sum\limits_{I=0}^{L-1-\lambda}\xi^{{b}_I-{b}_{I+\lambda}}=0
\end{equation}
if and only if $\{\mathbf{\Phi}_1,\mathbf{\Phi}_2\}$ is a Golay complementary pair (GCP).

For each integer $I$ with $0\leq I\leq L-1-\lambda$, let 
$$I=(i_1,i_2,\cdots,i_{m},\beta)\,\,\,\text{and}\,\,\,I+\lambda=(j_1,j_2,\cdots,j_{m},\beta')$$
where $i_k,j_k\in\left\lbrace 0,1\right\rbrace$ for $1\leq k \leq m$, and 
$0\leq \beta,\beta'\leq M-1$.

The integer interval $D=\left[0,L-1-\lambda \right]$ can be partitioned into the following three disjoint subsets
\[
\begin{aligned}
	D_1&=\bigl\{I\in D : i_{\pi(1)}=1-j_{\pi(1)}\bigr\}, \text{ namely, } \bigl(i_{\pi(1)},j_{\pi(1)}\bigr)=(0,1) \text{ or } (1,0), \\[4pt]
	D_2&=\bigl\{I\in D : i_{\pi(1)}=j_{\pi(1)},\ \exists u\in\{2,\dots,m\}, \\
	&\quad\ i_{\pi(k)}=j_{\pi(k)},\ \forall 1\leq k\leq u-1,\ i_{\pi(u)}=1-j_{\pi(u)}\bigr\}, \\[4pt]
	D_3&=\bigl\{I\in D : (i_1,i_2,\dots,i_{m})=(j_1,j_2,\dots,j_{m})\bigr\}, \text{ so that } \\
	&\quad\ \bigl(i_{\pi(1)},i_{\pi(2)},\dots,i_{\pi(m)}\bigr)=\bigl(j_{\pi(1)},j_{\pi(2)},\dots,j_{\pi(m)}\bigr).
\end{aligned}
\]
We now show that 
\begin{equation*}
	\sum\limits_{I\in D_r}(\xi^{{a}_I-{a}_{I+\lambda}}+\xi^{{b}_I-{b}_{I+\lambda}})=0 \quad\text{for}\,\,r=1,2
\end{equation*}
and that
\begin{equation*}
	\sum\limits_{I\in D_3}(\xi^{{a}_I-{a}_{I+\lambda}}+\xi^{{b}_I-{b}_{I+\lambda}})=0 
\end{equation*}
if and only if $\{\mathbf{\Phi}_1,\mathbf{\Phi}_2\}$ is a GCP, which together imply the desired equality in (\ref{CAF-GCP}).

Case 1:\,\,$I\in D_1$\\
By definition, $I=(i_1,i_2,\cdots,i_{m},\beta)$, $I+\lambda=(j_1,j_2,\cdots,j_{m},\beta')$ and $\bigl(i_{\pi(1)},j_{\pi(1)}\bigr)=(0,1)$ or $(1,0)$. From the definition of the functions $f$, $a$ and $b$ in (\ref{EBFdef}),
we get 
\begin{eqnarray*}\nonumber
	{b}_I-{a}_I
	&= &f(i_1,i_2,\cdots,i_{m},\beta)+2hi_{\pi(1)}+\theta'-(f(i_1,i_2,\cdots,i_{m},\beta)+\theta)\nonumber\\
	&= &2hi_{\pi(1)}+\theta'-\theta.
\end{eqnarray*}
Similarly, we have ${b}_{I+\lambda}-{a}_{I+\lambda}=2hj_{\pi(1)}+\theta'-\theta$. Combining these results, we obtain
$$({b}_I-{b}_{I+\lambda})-({a}_I-{a}_{I+\lambda})=({b}_I-{a}_{I})-({b}_{I+\lambda}-{a}_{I+\lambda})=2h(i_{\pi(1)}-j_{\pi(1)})\equiv 2h\,\,(\bmod\,4h).$$
Since $\xi^{2h}=-1$,  it follows that
$$\xi^{{a}_I-{a}_{I+\lambda}}+\xi^{{b}_I-{b}_{I+\lambda}}=\xi^{{a}_I-{a}_{I+\lambda}}+\xi^{{a}_I-{a}_{I+\lambda}+2h}=0.$$
Summing over all $I\in D_1$, we conclude that $\sum\limits_{I\in D_1}(\xi^{{a}_I-{a}_{I+\lambda}}+\xi^{{b}_I-{b}_{I+\lambda}})=0.$

Case 2:\,\,$I\in D_2$\\
By definition, $I=(i_1,i_2,\cdots,i_{m},\beta)$, $I+\lambda=(j_1,j_2,\cdots,j_{m},\beta')$ where there exists some integer $u$, $2\leq u\leq m$ such that $i_{\pi(k)}=j_{\pi(k)}$ for all $1\leq k\leq u-1$, and $ i_{\pi(u)}\neq j_{\pi(u)}$.
Define 
$$I'=(i'_1,i'_2,\cdots,i'_{m},\beta)\,\,\,\text{and}\,\,\,J'=(j'_1,j'_2,\cdots,j'_{m},\beta')$$
where 
\begin{equation}\label{IJdef}
	\begin{aligned}
		i'_{\pi(k)} &= i_{\pi(k)}, & j'_{\pi(k)} &= j_{\pi(k)}, && \text{for } 1\leq k \leq m,\,k\neq u-1, \\
		i'_{\pi(u-1)} &= 1-i_{\pi(u-1)}, & j'_{\pi(u-1)} &= 1-j_{\pi(u-1)}.
	\end{aligned}
\end{equation}
Since $i_{\pi(u-1)}=j_{\pi(u-1)}$, which implies $i'_{\pi(u-1)}=j'_{\pi(u-1)}$. Thus, we have 
\begin{eqnarray*}\nonumber
	I'-I
	&= &(0,\cdots,0,i'_{\pi(u-1)}-i_{\pi(u-1)},0,\cdots,0)\nonumber\\
	&= &(0,\cdots,0,j'_{\pi(u-1)}-j_{\pi(u-1)},0,\cdots,0)\nonumber\\
	&= &J'-(I+\lambda),
\end{eqnarray*}
and so $J'=I'+\lambda$. Namely, $I'\in D_2$ and $I'\neq I$.
From (\ref{EBFdef}) we get 
\begin{eqnarray*}\nonumber
	a_{I'}-a_{I}
	&= &a(i'_1,i'_2,\cdots,i'_{m},\beta)-a(i_1,i_2,\cdots,i_{m},\beta)\nonumber\\
	&= &2h(i_{\pi(u-2)}+i_{\pi(u)})(i'_{\pi(u-1)}-i_{\pi(u-1)})+c_{\pi(u-1)}(i'_{\pi(u-1)}-i_{\pi(u-1)})\nonumber\\
	&\equiv&2h(i_{\pi(u-2)}+i_{\pi(u)})+c_{\pi(u-1)}(1-2i_{\pi(u-1)})\,\,(\bmod\,4h). 
\end{eqnarray*}
Similarly, we have 
\begin{eqnarray*}\nonumber
	a_{I'+\lambda}-a_{I+\lambda}
	&= &a(j'_1,j'_2,\cdots,j'_{m},\beta)-a(j_1,j_2,\cdots,j_{m},\beta)\nonumber\\
	&\equiv&2h(j_{\pi(u-2)}+j_{\pi(u)})+c_{\pi(u-1)}(1-2j_{\pi(u-1)})\,\,(\bmod\,4h). 
\end{eqnarray*}
If $u=2$, let $j_{\pi(0)}=i_{\pi(0)}=0$. Thus 
\begin{eqnarray*}\nonumber
	(a_{I'}-a_{I'+\lambda})-(a_{I}-a_{I+\lambda})
	&= &(a_{I'}-a_{I})-(a_{I'+\lambda}-a_{I+\lambda})\nonumber\\
	&\equiv&2h(i_{\pi(u)}-j_{\pi(u)})\,\,(\bmod\,4h)\\
	&\equiv &2h\,\,(\bmod\,4h).
\end{eqnarray*}
Therefore 
$$\xi^{{a}_{I'}-{a}_{I'+\lambda}}+\xi^{{a}_I-{a}_{I+\lambda}}=\xi^{{a}_I-{a}_{I+\lambda}+2h}+\xi^{{a}_I-{a}_{I+\lambda}}=0.$$
Since set $D_2$ is the union of $\frac{|D_2|}{2}$ disjoint pairs $\{I,I'\}$,
it follows that $\sum\limits_{I\in D_2}\xi^{{a}_I-{a}_{I+\lambda}}=0.$
From the definition in (\ref{EBFdef}) we know that $b_{I'}-b_{I}=a_{I'}-a_{I}$ and $b_{I'+\lambda}-b_{I+\lambda}=a_{I'+\lambda}-a_{I+\lambda}$. Thus, by a similar argument, we also obtain 
$\sum\limits_{I\in D_2}\xi^{{b}_I-{b}_{I+\lambda}}=0$. 
And then, $\sum\limits_{I\in D_2}(\xi^{{a}_I-{a}_{I+\lambda}}+\xi^{{b}_I-{b}_{I+\lambda}})=0.$

Case 3:\,\,$I\in D_3$\\
By definition, $I=(i_1,i_2,\cdots,i_{m},\beta)$ and  $I+\lambda=(i_1,i_2,\cdots,i_{m},\beta')$ where $(i_1,i_2,\cdots,i_{m})\in\mathbb{Z}_{2}^m$, $\beta'=\beta+\lambda$ and $0\leq \beta,\beta'\leq M-1$.
Therefore, $1\leq \lambda=\beta'-\beta\leq M-1$. From the definition in (\ref{EBFdef}) we obtain 
\[
\begin{aligned}
	b_{I}-b_{I+\lambda} &= a_{I}-a_{I+\lambda} = f(i_1,i_2,\dots,i_{m},\beta)-f(i_1,i_2,\dots,i_{m},\beta')\\
	&= h\bigl[i_{\pi(m)}\bigl(\phi_2(\beta)-\phi_1(\beta)\bigr)+\phi_1(\beta)-i_{\pi(m)}\bigl(\phi_2(\beta')-\phi_1(\beta')\bigr)-\phi_1(\beta')\bigr]\\
	&=
	\begin{cases}
		h(\phi_1(\beta)-\phi_1(\beta')), & \text{if}\,\,i_{\pi(m)}=0,\\[4pt]
		h(\phi_2(\beta)-\phi_2(\beta')), & \text{if}\,\,i_{\pi(m)}=1.
	\end{cases}
\end{aligned}
\]
Then for $1\leq \lambda\leq M-1$, we get
 \begin{eqnarray*}\nonumber
 	\sum\limits_{I\in D_3}\xi^{{b}_I-{b}_{I+\lambda}}
 	&= &\sum\limits_{I\in D_3}\xi^{{a}_I-{a}_{I+\lambda}}\nonumber\\
 		&=&\sum\limits_{i_{\pi(1)},\cdots,i_{\pi(m-1)}=0}^1\sum\limits_{\beta=0}^{M-1-\lambda}\xi^{h(\phi_1(\beta)-\phi_1(\beta'))}+\xi^{h(\phi_2(\beta)-\phi_2(\beta'))}\\
 	&=&2^{m-1}\sum\limits_{\beta=0}^{M-1-\lambda}\omega^{\phi_1(\beta)-\phi_1(\beta')}+\omega^{\phi_2(\beta)-\phi_2(\beta')}\\
 	&=&2^{m-1}\sum\limits_{\beta=0}^{M-1-\lambda}\omega^{\phi_1(\beta)-\phi_1(\beta+\lambda)}+\omega^{\phi_2(\beta)-\phi_2(\beta+\lambda)}\\
 	&\equiv &2^{m-1}\sum\limits_{\beta=0}^{M-1-\lambda}\mathbf{\Phi}_1(\beta)\mathbf{\Phi}_1^*(\beta+\lambda)+\mathbf{\Phi}_2(\beta)\mathbf{\Phi}_2^*(\beta+\lambda).
 \end{eqnarray*}
 which implies that $\sum\limits_{I\in D_3}(\xi^{{a}_I-{a}_{I+\lambda}}+\xi^{{b}_I-{b}_{I+\lambda}})=0$ if and only if $\{\mathbf{\Phi}_1,\mathbf{\Phi}_2\}$ forms a Golay complementary pair.
 
Combining the above results for $D_1$, $D_2$ and $D_3$, we obtain that $$\sum\limits_{I\in D_1}(\xi^{{a}_I-{a}_{I+\lambda}}+\xi^{{b}_I-{b}_{I+\lambda}})+\sum\limits_{I\in D_2}(\xi^{{a}_I-{a}_{I+\lambda}}+\xi^{{b}_I-{b}_{I+\lambda}})+\sum\limits_{I\in D_3}(\xi^{{a}_I-{a}_{I+\lambda}}+\xi^{{b}_I-{b}_{I+\lambda}})=0,$$
if and only if $\{\mathbf{\Phi}_1,\mathbf{\Phi}_2\}$ forms a GCP, which completes the proof of Theorem \ref{if-th}.

\end{proof}
\begin{remark}\label{Rem=1}
	For the case $m=1$, we set $\sum\limits_{k=1}^{m-1}x_{\pi(k)}x_{\pi(k+1)}=0$ and define
	\begin{equation*}
		\begin{cases}
			f(x_1,y)=c_1x_1+h\left[ x_1(\phi_2(y)-\phi_1(y))+\phi_1(y)\right],\\
			a(x_1,y)=f(x_1,y)+\theta,\\
			b(x_1,y)=f(x_1,y)+2hx_1+\theta'.
		\end{cases}
	\end{equation*}
	It can be similarly verified that the proof of Theorem \ref{if-th} still holds for $m=1$.
\end{remark}
Based on Lemma \ref{QGCPsExist}, Theorem \ref{if-th} yields the following immediate result.
\begin{cor}
	There exist $(M\cdot2^m,4h)$-GCPs for all integers $m,h\geq1$, where
	\[
	M=2^{a+u}3^b5^c11^d13^e\geq2,
	\]
	with $a,b,c,d,e,u\geq0$ satisfying
	\[
	b+c+d+e\leq a+2u+1 \quad \text{and} \quad u\leq c+e.
	\]
\end{cor}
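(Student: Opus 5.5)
The corollary follows by combining Lemma \ref{QGCPsExist} with the ``if'' direction of Theorem \ref{if-th}, together with Remark \ref{Rem=1} for the case $m=1$.

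Fix integers $m,h\geq1$ and an $M$ of the stated form. By Lemma \ref{QGCPsExist} there is an $(M,4)$-GCP $\{\mathbf{\Phi}_1,\mathbf{\Phi}_2\}$ with entries $\Phi_r(i)=\omega^{\phi_r(i)}$. The exponent functions $\phi_1,\phi_2:\mathbb{Z}_M\rightarrow\mathbb{Z}_4$ are therefore defined, and they are exactly the input required by (\ref{EBFdef}).

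Next we build the EBFs. Choose any permutation $\pi$ of $\{1,\dots,m\}$, for instance the identity, and any constants $c_k,\theta,\theta'\in\mathbb{Z}_{4h}$, for instance all zero. Define $f,a,b$ by (\ref{EBFdef}) and let $\mathbf{A},\mathbf{B}$ be the sequences in (\ref{seqdefab}). Each has length $M\cdot2^m$ and entries that are powers of $\xi=e^{2\pi\sqrt{-1}/(4h)}$, so both are $4h$-ary. A small point to check is that the term $h\phi_r(y)$ is well defined modulo $4h$ when $\phi_r(y)$ is read as a residue modulo $4$. This holds because $h\cdot 4\equiv0 \pmod{4h}$, and it is also what gives $\xi^{h\phi}=\omega^{\phi}$.

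For $m\geq2$ (note $M\geq2$ holds by hypothesis), Theorem \ref{if-th} applies directly. Since $\{\mathbf{\Phi}_1,\mathbf{\Phi}_2\}$ is an $(M,4)$-GCP, the pair $\{\mathbf{A},\mathbf{B}\}$ is an $(M\cdot2^m,4h)$-GCP.

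The only real obstacle is the case $m=1$, which Theorem \ref{if-th} excludes. Here we use the modified functions of Remark \ref{Rem=1} and check the three-case argument directly. The set $D_2$ is empty, since there is no $u\in\{2,\dots,m\}$. On $D_1$, where $i_1\neq j_1$, the difference $b-a=2hx_1+\theta'-\theta$ again forces the $\mathbf{A}$ and $\mathbf{B}$ terms to cancel pairwise. On $D_3$, where $i_1=j_1$, the $D_3$ computation in the proof of Theorem \ref{if-th} runs unchanged with prefactor $2^{0}=1$, and it reduces the sum to $C_{\mathbf{\Phi}_1}(\lambda)+C_{\mathbf{\Phi}_2}(\lambda)=0$. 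Hence the conclusion also holds for $m=1$, which completes the corollary for all $m,h\geq1$.
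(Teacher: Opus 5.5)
Your proposal is correct and follows the paper's route: the $(M,4)$-GCP supplied by Lemma~\ref{QGCPsExist} is fed into the ``if'' direction of Theorem~\ref{if-th}, and Remark~\ref{Rem=1} covers $m=1$. Your additional checks, namely that $h\phi_r(y)$ is well defined modulo $4h$ and the explicit $m=1$ case analysis with $D_2=\emptyset$, are accurate details the paper leaves implicit.
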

\subsection{Examples}
In this subsection, for illustration, we show two examples of GCPs obtained by the construction given in Theorem \ref{if-th}.
\begin{example}  
	Let $M=18$ and $h\geq 1$. We construct a $(2M,4h)$-GCP 
	$$\mathbf{A}=\left(A_{0},A_{1},\cdots,A_{35} \right)\,\,\,\text{and}\,\,\,\mathbf{B}=\left(B_{0},B_{1},\cdots,B_{35} \right)$$
	with length $2M=36$, which is divisible by \(9\), the square of an odd prime. For each $0\leq I\leq35$, we set  
	$$A_I=\xi^{a_I}\,\,\,\text{and}\,\,\,B_I=\xi^{b_I},\,\,\,\xi=e^\frac{2\pi\sqrt{-1}}{4h}$$ 
	where 
	$$\boldsymbol{a}=\left(a_0,a_1,\cdots,a_{35} \right)\,\,\,\text{and}\,\,\,\boldsymbol{b}=\left(b_0,b_1,\cdots,b_{35} \right)$$
	are sequences over $\mathbb{Z}_{4h}$. 
	To employ the construction in Theorem \ref{if-th}, we need to find two functions
	\[
	\phi_1,\phi_2:\mathbb{Z}_{18}\rightarrow\mathbb{Z}_{4}
	\]
	such that the sequences
	\[
	\boldsymbol{\Phi}_1=\bigl(\Phi_1(0),\Phi_1(1),\dots,\Phi_1(17)\bigr),\quad
	\boldsymbol{\Phi}_2=\bigl(\Phi_2(0),\Phi_2(1),\dots,\Phi_2(17)\bigr)
	\]
	with \(\Phi_1(i)=\omega^{\phi_1(i)}\), \(\Phi_2(i)=\omega^{\phi_2(i)}\), and \(\omega=e^{\frac{2\pi\sqrt{-1}}{4}}\) for \(0\leq i\leq17\), form a quaternary Golay complementary pair (QGCP) of length \(18\). The existence of such a QGCP is guaranteed by Lemma \ref{QGCPsExist} with $a=1,b=2,c=d=u=e=0$. Several constructions are available in the literature; see, e.g., Proposition 9 in \cite{Gibson2011} or Theorem 5 in \cite{Fiedler2008JCT}.
	One can verify that for the following functions $\phi_1,\phi_2:\mathbb{Z}_{18}\rightarrow\mathbb{Z}_{4}$ defined in Table \ref{tab:phi} satisfy
	\begin{equation*}
		C_{\mathbf{\Phi}_1}(\lambda)+C_{\mathbf{\Phi}_2}(\lambda)	=\sum\limits_{I=0}^{17-\lambda}\omega^{\phi_1(I)-\phi_1(I+\lambda)}+\omega^{\phi_2(I)-\phi_2(I+\lambda)}=0,\,\,\,1\leq\lambda\leq17.
	\end{equation*} 
	This implies that $\{\boldsymbol{\Phi}_1,\boldsymbol{\Phi}_2\}$ is indeed a $(18,4)$-GCP.
	\begin{table}[htp]
		\centering
		\caption{Functions $\phi_1$ and $\phi_2$ generating an $(18,4)$-QGCP.}
		\renewcommand{\arraystretch}{1.2} 
		\begin{tabular}{@{}c*{18}{c}@{}} 
			\toprule 
			$y$   & 0 & 1 & 2 & 3 & 4 & 5 & 6 & 7 & 8 & 9 & 10 & 11 & 12 & 13 & 14 & 15 & 16 & 17 \\
			\midrule  
			$\phi_1(y)$ & 0 & 0 & 2& 0 & 0 & 2 & 2 & 2 & 0 & 0 & 3 & 0 & 1 & 0 & 1 & 0 & 3 & 0\\
			$\phi_2(y)$  & 0 & 1 & 0& 0 & 1 & 0 & 2 & 3 & 2 & 0 & 2 & 2 & 1 & 3 & 3 & 0 & 2 & 2  \\
			\bottomrule 
		\end{tabular}
		\label{tab:phi}
	\end{table}
	
	Now we employ the construction in Theorem \ref{if-th} with $m=1$ (see Remark \ref{Rem=1}).
	Define the functions $f,a,b:\mathbb{Z}_{2}\times\mathbb{Z}_{18}\rightarrow\mathbb{Z}_{4h}$ by 
		\begin{equation}\label{m=1EBFdef}
		\begin{cases}
			f(x,y)=x+h\left[ x(\phi_2(y)-\phi_1(y))+\phi_1(y)\right],\\
			a(x,y)=f(x,y)+1,\\
			b(x,y)=f(x,y)+2hx.
		\end{cases}
	\end{equation}
	From \eqref{m=1EBFdef}, for $0 \leq I = y \leq 17$, we obtain
	$$a_I=a(0,y)=f(0,y)+1=h\phi_1(y)+1,\quad b_I=f(0,y)=h\phi_1(y),$$
	For $18 \leq I = 18 + y \leq 35$, we have
	$$a_I=a(1,y)=f(1,y)+1=h\phi_2(y)+1,\quad b_I=f(1,y)+2h=h(\phi_2(y)+2)+1.$$
	This yields the sequences
	$$\boldsymbol{a}=\left(a_0,a_1,\cdots,a_{35} \right)\,\,\,\text{and}\,\,\,\boldsymbol{b}=\left(b_0,b_1,\cdots,b_{35} \right),\,\,\,a_I,b_I\in\mathbb{Z}_{4h}.$$
	 We then define the corresponding $4h$-ary complex-valued sequences
	$$\mathbf{A}=\left(A_{0},A_{1},\cdots,A_{35} \right)\,\,\,\text{and}\,\,\,\mathbf{B}=\left(B_{0},B_{1},\cdots,B_{35} \right)$$
    with $A_I=\xi^{a_I}$ and $B_I=\xi^{b_I}$, where $\xi=e^{\frac{2\pi\sqrt{-1}}{4h}}$. These sequences are presented in the following Table \ref{0-17seq} and \ref{18-35seq} below. 
 \begin{table}[htp]
 	\centering
 	\caption{Sequences $\boldsymbol{a},\boldsymbol{b}, \mathbf{A}, \mathbf{B}$ for $0\leq I\leq17$}
 	\label{0-17seq}
 	\renewcommand{\arraystretch}{1.4}
 	\footnotesize
 	\setlength{\tabcolsep}{4.6pt} 
 	\begin{tabular}{@{}c*{18}{c}@{}}
 		\toprule
 		$I$        & 0 & 1 & 2 & 3 & 4 & 5 & 6 & 7 & 8 & 9 & 10 & 11 & 12 & 13 & 14 & 15 & 16 & 17 \\
 		\midrule
 		$a_I$      & 1 & 1 & $2h\!+\!1$ & 1 & 1 & $2h\!+\!1$ & $2h\!+\!1$ & $2h\!+\!1$ & 1 & 1 & $3h\!+\!1$ & 1 & $h\!+\!1$ & 1 & $h\!+\!1$ & 1 & $3h\!+\!1$ & 1 \\
 		$b_I$      & 0 & 0 & $2h$ & 0 & 0 & $2h$ & $2h$ & $2h$ & 0 & 0 & $3h$ & 0 & $h$ & 0 & $h$ & 0 & $3h$ & 0 \\
 		$A_I$      & $\xi$ & $\xi$ & $-\xi$ & $\xi$ & $\xi$ & $-\xi$ & $-\xi$ & $-\xi$ & $\xi$ & $\xi$ & $-\omega\xi$ & $\xi$ & $\omega\xi$ & $\xi$ & $\omega\xi$ & $\xi$ & $-\omega\xi$ & $\xi$ \\
 		$B_I$      & $1$ & $1$ & $-\xi$ & $1$ & $1$ & $-1$ & $-1$ & $-1$ & $1$ & $1$ & $-\omega$ & $1$ & $\omega$ & $1$ & $\omega$ & $1$ & $-\omega$ & $1$ \\
 		\bottomrule
 	\end{tabular}
 \end{table}
	\begin{table}[htp]
		\centering
		\caption{Sequences $\boldsymbol{a},\boldsymbol{b}, \mathbf{A}, \mathbf{B}$ for $18\leq I\leq35$}
		\label{18-35seq}
		\renewcommand{\arraystretch}{1.4}
		\footnotesize
		\setlength{\tabcolsep}{1.3pt}
		\begin{tabular}{@{}c*{18}{c}@{}}
			\toprule
			$I$    & 18 & 19 & 20 & 21 & 22 & 23 & 24 & 25 & 26 & 27 & 28 & 29 & 30 & 31 & 32 & 33 & 34 & 35 \\
			\midrule
			$a_I$  & 2 & $h\!+\!2$ & 2 & 2 & $h\!+\!2$ & 2 & $2h\!+\!2$ & $3h\!+\!2$ & $2h\!+\!2$ & 2 & $2h\!+\!2$ & $2h\!+\!2$ & $h\!+\!2$ & $3h\!+\!2$ & $3h\!+\!2$ & 2 & $2h\!+\!2$ & $2h\!+\!2$ \\
			$b_I$  & $2h\!+\!1$ & $3h\!+\!1$ & $2h\!+\!1$ & $2h\!+\!1$ & $3h\!+\!1$ & $2h\!+\!1$ & $1$ & $h\!+\!1$ & $1$ & $2h\!+\!1$ & $1$ & $1$ & $3h\!+\!1$ & $h\!+\!1$ & $h\!+\!1$ & $2h\!+\!1$ & $1$ & $1$ \\
			$A_I$  & $\xi^2$ & $\omega\xi^{2}$ & $\xi^2$ & $\xi^2$ & $\omega\xi^{2}$ & $\xi^2$ & $-\xi^{2}$ & $-\omega\xi^{2}$ & $-\xi^{2}$ & $\xi^2$ & $-\xi^{2}$ & $-\xi^{2}$ & $\omega\xi^{2}$ & $-\omega\xi^{2}$ & $-\omega\xi^{2}$ & $\xi^2$ & $-\xi^2$ & $-\xi^2$ \\
			$B_I$  & $-\xi$ & $-\omega\xi$ & $-\xi$ & $-\xi$ & $-\omega\xi$ & $-\xi$ & $\xi$ & $\omega\xi$ & $\xi$ & $-\xi$ & $\xi$ & $\xi$ & $-\omega\xi$ & $\omega\xi$ & $\omega\xi$ & $-\xi$ & $\xi$ & $\xi$ \\
			\bottomrule
		\end{tabular}
	\end{table}
	
	For any integer $h\geq1$, one can readily verify that 
	\begin{equation*}
		C_{\mathbf{A}}(\lambda)+C_{\mathbf{B}}(\lambda)	=\sum\limits_{I=0}^{35-\lambda}A_IA_{I+\lambda}^*+B_IB_{I+\lambda}^*=0,\,\,\,1\leq\lambda\leq35,
	\end{equation*}
confirming that $\{\mathbf{A},\mathbf{B}\}$ is indeed a $(36,4h)$-GCP.
This is consistent with the result presented in Theorem \ref{if-th}.
	
\end{example}
\begin{example} 
	Construction of $(44,4)$-GCP.
	We first adopt a (11,4)-GCP $\{\boldsymbol{\Phi}_1,\boldsymbol{\Phi}_2\}$ from reference [9], which is given by 
	\begin{equation*}
		\Phi_1:(1,\omega,-1,1,-1,\omega,-\omega,-1,\omega,\omega,1),\quad \Phi_2:(1,1,-\omega,-\omega,-\omega,1,1,\omega,-1,1,-1)
	\end{equation*}
	where $\omega=e^{\frac{2\pi\sqrt{-1}}{4}}=i=\sqrt{-1}.$
	
	The corresponding functions $\phi_1,\phi_2:\mathbb{Z}_{11}\rightarrow\mathbb{Z}_{4}$ are defined as shown in the following table 
	  \begin{table}[htp]
	  	\centering
	  	\renewcommand{\arraystretch}{1.2} 
	  	\begin{tabular}{@{}c*{11}{c}@{}} 
	  		\toprule 
	  		$y$   & 0 & 1 & 2 & 3 & 4 & 5 & 6 & 7 & 8 & 9 & 10  \\
	  		\midrule  
	  		$\phi_1(y)$ & 0 & 1 & 2& 0 & 2 & 1 & 3 & 2 & 1 & 1 & 0 \\
	  		$\phi_2(y)$  & 0 & 0 & 3& 3 & 3 & 0 & 0 & 1 & 2 & 0 & 2  \\
	  		\bottomrule 
	  	\end{tabular}
	  \end{table}
	  
	  Next, we consider the functions 
	  $f(x_1,x_2,y),a(x_1,x_2,y),b(x_1,x_2,y):\mathbb{Z}_{2}^2\times\mathbb{Z}_{11}\rightarrow\mathbb{Z}_{4}$ with $x_1,x_2\in\{0,1\}$ and $0\leq y\leq10$, defined as follows 
	  \begin{equation*}
	  	\begin{cases}
	  		f(x_1,x_2,y)=2x_1x_2+x_1+x_2+x_2\left[ (\phi_2(y)-\phi_1(y))+\phi_1(y)\right],\\
	  		a(x_1,x_2,y)=f(x_1,x_2,y)+1,\\
	  		b(x_1,x_2,y)=f(x_1,x_2,y)+2x_1+1.
	  	\end{cases}
	  \end{equation*}
	 From these functions, we obtain two sequences of length $44=2^2\times11$
	 $$\boldsymbol{a}=\left(a_0,a_1,\cdots,a_{43} \right)\,\,\,\text{and}\,\,\,\boldsymbol{b}=\left(b_0,b_1,\cdots,b_{43} \right)$$
	 where $a_I,b_I\in\mathbb{Z}_{4}$ for all $0\leq I\leq43$. Corresponding to these integer sequences, we define two complex sequences
	 $$\mathbf{A}=\left(A_{0},A_{1},\cdots,A_{43} \right)\,\,\,\text{and}\,\,\,\mathbf{B}=\left(B_{0},B_{1},\cdots,B_{43} \right)$$
	 with $A_I=\omega^{a_I}$ and $B_I=\omega^{b_I}$, where $\omega=e^{\frac{2\pi\sqrt{-1}}{4}}=i=\sqrt{-1}$. For each $0\leq I\leq43$, the correspondence between $I$ and $(x_1,x_2,y)$ is established by $I=11(x_2+2x_1)+y$, 
	 where $x_1,x_2\in\{0,1\}$ and $0\leq y\leq10$. Under this correspondence, $a_I=a(x_1,x_2,y)$ and $b_I=b(x_1,x_2,y)$ hold. Specifically, the explicit expressions for $a_I$ and $b_I$ in different intervals of $I$ are derived as follows
	 \begin{itemize}
	 	\item When $0\leq I=y\leq10$, $a_I=a(0,0,y)=f(0,0,y)+1=\phi_1(y)+1=b_I$.
	 	
	 	\item When $11\leq I=y+11\leq21$, $a_I=a(0,1,y)=f(0,1,y)+1=\phi_2(y)+2=b_I$.
	 	
	 	\item When $22\leq I=y+22\leq32$, $a_I=a(1,0,y)=f(1,0,y)+1=\phi_1(y)+2$ and $b_I=f(1,0,y)+3=\phi_1(y)$.
	 	
	 	\item When $33\leq I=y+33\leq43$, $a_I=a(1,1,y)=f(1,1,y)+1=\phi_2(y)+1$ and $b_I=f(1,1,y)+3=\phi_2(y)+3$.
	 	
	 \end{itemize}
	 These sequences are presented in the following Tables \ref{0-21seq} and \ref{22-43seq}  below.
	 \begin{table}[h!]
	 	\centering
	 	\caption{Sequences $\boldsymbol{a},\boldsymbol{b}, \mathbf{A}, \mathbf{B}$ for $0 \leq I \leq 21$}
	 	\label{0-21seq}
	 	\renewcommand{\arraystretch}{1.4}
	 	\footnotesize
	 	\setlength{\tabcolsep}{4.6pt} 
	 	\begin{tabular}{@{}c*{23}{c}@{}}
	 		\toprule
	 			$I$       & 0 & 1 & 2 & 3 & 4 & 5 & 6 & 7 & 8 & 9 & 10& 11& 12& 13& 14& 15& 16& 17& 18& 19& 20& 21 \\
	 			\midrule
	 			$a_I$     & 1 & 2 & 3 & 1 & 3 & 2 & 0 & 3 & 2 & 2 & 1 & 2 & 2 & 1 & 1 & 1 & 2 & 2 & 3 & 0 & 2 & 0 \\
	 			$b_I$     & 1 & 2 & 3 & 1 & 3 & 2 & 0 & 3 & 2 & 2 & 1 & 2 & 2 & 1 & 1 & 1 & 2 & 2 & 3 & 0 & 2 & 0 \\
	 			$A_I$     & $\omega$ & $-1$ & $-\omega$ & $\omega$ & $-\omega$ & $-1$ & $1$ & $-\omega$ & $-1$ & $-1$ & $\omega$ & $-1$ & $-1$ & $\omega$ & $\omega$ & $\omega$ & $-1$ & $-1$ & $-\omega$ & $1$ & $-1$ & $1$ \\
	 			$B_I$     & $\omega$ & $-1$ & $-\omega$ & $\omega$ & $-\omega$ & $-1$ & $1$ & $-\omega$ & $-1$ & $-1$ & $\omega$ & $-1$ & $-1$ & $\omega$ & $\omega$ & $\omega$ & $-1$ & $-1$ & $-\omega$ & $1$ & $-1$ & $1$ \\
	 			\bottomrule 
	 		\end{tabular}
	 	\end{table}
	 	\begin{table}[h!]
	 		\centering
	 		\caption{Sequences $\boldsymbol{a},\boldsymbol{b}, \mathbf{A}, \mathbf{B}$ for $22 \leq I \leq 43$}
	 		\label{22-43seq}
	 		\renewcommand{\arraystretch}{1.4}
	 		\footnotesize
	 		\setlength{\tabcolsep}{3.6pt} 
	 		\begin{tabular}{@{}c*{23}{c}@{}}
	 			\toprule
	 				$I$       & 22& 23& 24& 25& 26& 27& 28& 29& 30& 31& 32& 33& 34& 35& 36& 37& 38& 39& 40& 41& 42& 43 \\
	 				\midrule
	 				$a_I$     & 2 & 3 & 0 & 2 & 0 & 3 & 1 & 0 & 3 & 3 & 2 & 1 & 1 & 0 & 0 & 0 & 1 & 1 & 2 & 3 & 1 & 3 \\
	 				$b_I$     & 0 & 1 & 2 & 0 & 2 & 1 & 3 & 2 & 1 & 1 & 0 & 3 & 3 & 2 & 2 & 2 & 3 & 3 & 0 & 1 & 3 & 1 \\
	 				$A_I$     & $-1$ & $-\omega$ & $1$ & $-1$ & $1$ & $-\omega$ & $\omega$ & $1$ & $-\omega$ & $-\omega$ & $-1$ & $\omega$ & $\omega$ & $1$ & $1$ & $1$ & $\omega$ & $\omega$ & $-1$ & $-\omega$ & $\omega$ & $-\omega$ \\
	 				$B_I$     & $1$ & $\omega$ & $-1$ & $1$ & $-1$ & $\omega$ & $-\omega$ & $-1$ & $\omega$ & $\omega$ & $1$ & $-\omega$ & $-\omega$ & $-1$ & $-1$ & $-1$ & $-\omega$ & $-\omega$ & $1$ & $\omega$ & $-\omega$ & $\omega$ \\
	 				\bottomrule 
	 			\end{tabular}
	 		\end{table}
	 
	 	For any integer $1\leq\lambda\leq43$, it can be readily verified that
	 	\begin{equation*}
	 		C_{\mathbf{A}}(\lambda)+C_{\mathbf{B}}(\lambda)	=\sum\limits_{I=0}^{43-\lambda}A_IA_{I+\lambda}^*+B_IB_{I+\lambda}^*=0,
	 	\end{equation*}
	 	confirming that $\{\mathbf{A},\mathbf{B}\}$ is indeed a $(44,4)$-GCP.
	 	This is consistent with the result presented in Theorem \ref{if-th}.
\end{example}
\begin{remark}
	In the example $5$ in [13], the authors attempt to present a $(44,4)$-GCP using $f(x_1,x_2,y),a(x_1,x_2,y),b(x_1,x_2,y):\mathbb{Z}_{2}^2\times\mathbb{Z}_{11}\rightarrow\mathbb{Z}_{4}$ for $x_1,x_2\in\{0,1\}$ and $0\leq y\leq10$, defined by 
	\begin{equation*}
		\begin{cases}
			f(x_1,x_2,y)=2x_1x_2+x_1+x_2+x_1\left[ (\chi_2(y)-\chi_1(y))+\chi_1(y)\right],\\
			a(x_1,x_2,y)=f(x_1,x_2,y)+1,\\
			b(x,y)=f(x_1,x_2,y)+2x_1+1.
		\end{cases}
	\end{equation*}
	where
	$\chi_1,\chi_2:\mathbb{Z}_{11}\rightarrow\mathbb{Z}_{4}$ are given by\\
	\begin{equation*}
		\begin{cases}
			\chi_1(y)=y+(y-2)\lceil \frac{y}{3}\rceil+\lceil \frac{y}{5}\rceil+2\lceil \frac{y}{9}\rceil+2\lceil \frac{y}{10}\rceil,\\
			\chi_2(y)=3\lceil \frac{y}{2}\rceil+(y-3)\lceil \frac{y}{4}\rceil+\lceil \frac{y}{10}\rceil.
				\end{cases}
	\end{equation*}
	The values of $\chi_1$ and $\chi_2$ are shown in following table 
	\begin{table}[h!]
		\centering
		\renewcommand{\arraystretch}{1.2}
		\begin{tabular}{@{}c*{11}{c}@{}}
			\toprule
			$y$        & 0  & 1  & 2  & 3  & 4  & 5  & 6  & 7  & 8  & 9   & 10  \\
			\midrule
			$\chi_1(y)$& 0  & 1  & 3  & 1  & 1 & 0 & 0 & 0 & 0 & 0  & 2  \\
			$\chi_2(y)$& 0  & 2  & 3  & 3  & 0  & 2 & 0 & 1 & 3 & 2  & 1  \\
			\bottomrule
		\end{tabular}
	\end{table}
	
	For $0\leq I\leq43$, the index correspondence is $I=11(x_2+2x_1)+y$,  
	with $x_1,x_2\in\{0,1\}$ and $0\leq y\leq10$, such that 
	 $a_I=a(x_1,x_2,y)=f(x_1,x_2,y)+1$ and $b_I=b(x_1,x_2,y)=f(x_1,x_2,y)+2x_1+1$.
    Let $\omega=e^{\frac{2\pi\sqrt{-1}}{4}}=i=\sqrt{-1}$. 
    It is clamied in [13] that the complex sequences 
	$$\mathbf{A}=\left(A_{0},A_{1},\cdots,A_{43} \right)\,\,\,\text{and}\,\,\,\mathbf{B}=\left(B_{0},B_{1},\cdots,B_{43} \right)$$
	defined by $A_I=\omega^{a_I}$ and $B_I=\omega^{b_I}$, form
	a $(44,4)$-GCP. It appears that some of the computations or prints presented in [13] may not be correct. 
	Taking $x_1=x_2=1$ (i.e., $33\leq I=y+33\leq43$), we have   $b_I=f(1,1,y)+3=\chi_2(y)+3$, which yields $$\left(b_{33},b_{34},\cdots,b_{43} \right)=\left(\chi_2(0)+3,\chi_2(1)+3,\cdots,\chi_2(10)+3 \right)=\left(3,1,\cdots,1,0 \right)$$
	and consequently $\left(B_{33},B_{34},\cdots,B_{43} \right)=\left(-\omega,\omega,\cdots,\omega,1 \right)$. This does not match the sequence $\left(-\omega,-\omega,\cdots,-\omega,1 \right)$ presented in [13]. Furthermore, the sequence $\mathbf{A}$ is shown in [13] with a shorter length $40$.
\end{remark}

\section{Conclusion}

In this paper, we investigate that the existence of a quaternary GCP of length $M$ is equivalent to the explicit constructibility of ($4h$)-ary GCPs of length $2^mM$ for all integers $h,m\geq1$. Our result directly yields GCPs with length parameters far more flexible than those achievable by existing methods, overcoming the limitations of conventional sequence design. We believe this result opens new possibilities for applying GCPs in modern communication systems, particularly in scenarios requiring diverse sequence lengths and alphabet sizes.

\section*{Acknowledgement}

The authors would like to thank the anonymous referees for their helpful comments and suggestions.


\begin{thebibliography}{99}

\bibitem{Golay1951}
 Golay, M.~J.~E.: Static multislit spectrometry and its application to the panoramic display of infrared spectra. J. Opt. Soc. Am. 41(7), 468--472 (1951)

\bibitem{Golay1961tit}
Golay, M.: Complementary series. IRE Trans. Inf. Theory. 7(2), 82--87 (1961)

\bibitem{Spasojevic2001tit}
Spasojevic, P., Georghiades, C.~N.: Complementary sequences for ISI channel estimation. IEEE Trans. Inf. Theory. 47(3), 1145--1152 (2001)


\bibitem{Paterson2000tit}
Paterson, K.~G.: Generalized Reed-Muller codes and power control in OFDM modulation. IEEE Trans. Inf. Theory 46(1), 104--120 (2000)

\bibitem{Sarkar2020TCOMM}
Sarkar, P., Majhi, S., Liu, Z.: A direct and generalized construction of polyphase complementary sets with low PMEPR and high code-rate for OFDM system. IEEE Trans. Commun. 68(10), 6245--6262 (2020)

\bibitem{Pezeshki2008tit}
Pezeshki, A., Calderbank, A.~R., Moran, W., Howard, S.~D.: Doppler resilient Golay complementary waveforms. IEEE Trans. Inf. Theory. 54(9), 4254--4266 (2008)

\bibitem{Budisin1991}
Budisin, S.: Efficient pulse compressor for Golay complementary sequences. Electron. Lett. 31, 219--220 (1991)

\bibitem{Borwein2004}
Borwein, P., Ferguson, R.: A complete description of Golay pairs for lengths up to 100. Math. Comput. 73, 967--985 (2004)

\bibitem{Craigen2002DM}
Craigen, R., Holzmann, W., Kharaghani, H.: Complex Golay sequences: Structure and applications. Discrete Math. 252, 73--89 (2002)

\bibitem{Seberry1992}
Seberry, J., Yamada, M.: Hadamard matrices, sequences, and block designs. In: Dinitz, J.H., Stinson, D.R. (eds.) Contemporary Design Theory: A Collection of Surveys. 431--560 (1992)

\bibitem{Davis1999tit}
Davis, J.~A., Jedwab, J.: Peak-to-mean power control in OFDM, Golay complementary sequences, and Reed--Muller codes. IEEE Trans. Inf. Theory. 45, 2397--2417 (1999)

\bibitem{Kumar2023TCOMM}
Kumar, P., Majhi, S., Paul, S.: A direct construction of Golay complementary pairs and binary complete complementary codes of length non-power of two. IEEE Trans. Commun. 71(3), 1352--1363 (2023)

\bibitem{Priyanshu2025CCDS}
Priyanshu, P., Roy, A., Paul, S., Majhi, S.: Constructions of non-binary Golay complementary pairs of new lengths. Cryptogr. Commun. 17, 809--822 (2025)

\bibitem{Wang2021tit}
Wang, Z., Ma, D., Gong, G., Xue, E.: New construction of complementary sequence (or array) sets and complete complementary codes. IEEE Trans. Inf. Theory. 67(7), 4902--4928 (2021)

\bibitem{Turyn1974}
TurynR.~J.: Hadamard matrices, Baumert-Hall units, four-symbol sequences, pulse compression, and surface wave encodings. J. Comb. Theory, Ser. A. 16(3), 313--333 (1974)

\bibitem{Gibson2011}
Gibson, R.~G., Jedwab, J.: Quaternary Golay sequence pairs I: even length. Des. Codes Cryptogr. 59(1--3), 131--146 (2011)

\bibitem{Fiedler2008JCT}
Fiedler, F., Jedwab, J., Parker, M.~G.: A multi-dimensional approach to the construction and enumeration of Golay complementary sequences. J. Comb. Theory Ser. A. 115(5), 753--776 (2008)



























\end{thebibliography}
\end{document}